\documentclass[conference,compsoc]{IEEEtran}
\ifCLASSOPTIONcompsoc
  \usepackage[nocompress]{cite}
\else
  \usepackage{cite}
\fi
\usepackage{xcolor}
\usepackage{multirow}
\usepackage{array}
\usepackage{booktabs}
\usepackage{makecell}
\usepackage{amsmath,amssymb,amsthm}
\usepackage{graphicx}
\usepackage{algorithm}
\usepackage{algpseudocode}
\usepackage{cuted}
\usepackage{rotating}
\usepackage{diagbox}
\usepackage{hyperref}
\usepackage{caption}

\theoremstyle{plain}
\newtheorem{theorem}{Theorem}
\newtheorem{Lemma}{Lemma}
\theoremstyle{definition}
\newtheorem{definition}{Definition}
\theoremstyle{remark}

\ifCLASSINFOpdf
\else
\fi
\begin{document}
%
\title{Quantum Cryptanalysis on Two Feistel-Type Constructions}


\author{
\IEEEauthorblockN{
Yan-Ying Zhu\IEEEauthorrefmark{1},
Bin-Bin Cai\IEEEauthorrefmark{1}\IEEEauthorrefmark{2}\IEEEauthorrefmark{3},
Fei Gao\IEEEauthorrefmark{2},
Song Lin\IEEEauthorrefmark{1}
}
\IEEEauthorblockA{
\IEEEauthorrefmark{1}College of Computer and Cyber Security, Fujian Normal University\\
Fuzhou, 350117, China\\
\IEEEauthorrefmark{2}State Key Laboratory of Networking and Switching Technology, Beijing University of Posts and Telecommunications\\
Beijing, 100876, China\\
\IEEEauthorrefmark{3}Digital Fujian Internet-of-Things Laboratory of Environmental Monitoring, Fujian Normal University\\
Fuzhou, 350117, China\\
Corresponding authors: Bin-Bin Cai (cbb@fjnu.edu.cn), Song Lin (lins95@fjnu.edu.cn)
}
}


%


\maketitle

\begin{abstract}
With the rapid advancement of quantum computing, the security of traditional cryptographic systems is facing significant challenges. As a fundamental component of symmetric cryptography, block ciphers have attracted increasing attention regarding their quantum security in recent years.

In this paper, we propose new quantum attacks on two Feistel-type constructions, i.e., New Structure IV and the lightweight block cipher LBlock, under the quantum chosen-plaintext attack (qCPA) model.
First, we construct an 8-round quantum distinguisher for New Structure IV and perform an $r(r>12)$-round quantum key-recovery attack. Compared with Li et al.'s qCPA-based result presented at INSCRYPT 2024, our attack reduces the time complexity by a factor of $2^{n/2}$, where $n$ denotes the subkey length.
Second, we design a 9-round quantum distinguisher for LBlock and extend it to an $r(r>15)$-round quantum key-recovery attack. The proposed quantum distinguisher covers one more round than Liu et al.'s automated-search distinguisher at ASIACRYPT 2025, which results in a $2^{16}$ reduction in the time complexity of our $r$-round quantum key-recovery attack on LBlock. Furthermore, by analyzing the structure of the round function and the key scheduling of LBlock, we reduce the subkey guess space for the 15-round attack from 76 bits to 52 bits, lowering the time complexity from \(O(2^{38})\) to \(O(2^{26})\) by a factor of \(2^{12}\).
\end{abstract}


%
\IEEEpeerreviewmaketitle

\section{Introduction}

Block ciphers are fundamental symmetric primitives that use a single key to encrypt and decrypt fixed-length data blocks. They offer high efficiency and wide applicability, making them a cornerstone of modern cryptographic systems. Over the past few decades, a variety of classical cryptanalytic techniques, including differential \cite{Biham1990}, linear \cite{Matsui1993}, and impossible differential cryptanalysis \cite{Biham1999}, have been proposed to evaluate their security. However, with the rapid advancement of quantum computing \cite{CG2022,CW2023,SJ2025,Zhen2026,ZC2026}, adversaries can exploit quantum parallelism and quantum interference to significantly accelerate cryptanalysis, enabling quantum attacks to outperform classical attacks in terms of time complexity. This motivates a rigorous reassessment of the quantum security of existing block ciphers.

In the quantum setting, quantum algorithms provide new tools for analyzing the security of block ciphers \cite{Grover1996,Simon1997,Leander2017,Bonnetain2018,Zhou2021,Zhou2023}. For example, Grover algorithm \cite{Grover1996} offers a quadratic speedup for exhaustive key search, reducing the time complexity from $O(2^n)$ to $O(2^{n/2})$, where $n$ denotes the key length. Simon algorithm \cite{Simon1997} can efficiently recover the hidden periodic structure of a function, which can be exploited to construct quantum distinguishers that distinguish a target cipher from a random permutation. Combining the strengths of both algorithms, the Grover-meets-Simon framework \cite{Leander2017} was proposed. This framework uses Grover algorithm to search over candidate subkeys and, for each guessed subkey, invokes Simon algorithm to verify whether the corresponding function has a hidden period. These algorithms have numerous extensions in cryptanalysis, including Grover meets Kuperberg algorithm\cite{Bonnetain2018}, Bernstein–Vazirani meet Grover algorithm\cite{Zhou2021}, and the distributed Simon algorithm\cite{Zhou2023}. Although many quantum algorithms offer significant cryptanalytic potential, effectively applying them to concrete cipher structures remains challenging.

To evaluate the practical impact of these quantum algorithms on symmetric cryptography, extensive studies have been conducted on the quantum security of block ciphers. Among various block cipher structures, generalized Feistel structures have attracted considerable attention due to their structural regularity. Dong et al. \cite{Dong2019} first analyzed the quantum security of Type-1 and Type-2 generalized Feistel structures (GFSs). They constructed polynomial-time quantum distinguishers under the quantum chosen-plaintext attack (qCPA) model and extended them to quantum key-recovery attacks by appending additional rounds. Following this line of research, Ni et al. \cite{Ni2019} and Sun et al. \cite{Sun2023} optimized distinguisher constructions and input selection strategies, thereby extending the quantum distinguishing attack against Type-1 GFS to \(d^2-1\) rounds, where $d$ denotes the number of branches. Polynomial-time quantum distinguishing attacks were also proposed for Type-3 GFS \cite{S2020,Zhang2023}. Beyond the conventional GFSs, researchers also explored other generalized Feistel variants, including New Structure IV \cite{Wu2011}. Recently, Li and Wei studied New Structure IV in the quantum chosen-ciphertext attack (qCCA) model and obtained new cryptanalytic results \cite{Li2024,Wei2025}. However, under the qCPA model, extending quantum distinguishers for New Structure IV to more rounds remains an open challenge.

Moreover, lightweight block ciphers are widely used in the Internet of Things and embedded devices due to their compact and efficient design. LBlock \cite{LBlock2011}, as a representative lightweight block cipher, features a distinctive round function and key scheduling. Although LBlock has been analyzed under quantum attack models~\cite{Xiang2024,Liu2025}, developing more effective quantum cryptanalytic techniques remains an open challenge.

\begin{table}[!t]
\centering
\caption{Comparison of quantum attacks on New Structure IV and LBlock under the qCPA model.}
\label{table1}
\footnotesize
\renewcommand{\arraystretch}{1.2}
\setlength{\tabcolsep}{3pt}

\begin{tabular}{
>{\centering\arraybackslash}m{1.5cm}
>{\centering\arraybackslash}m{0.9cm}
>{\centering\arraybackslash}m{1.7cm}
>{\centering\arraybackslash}m{2cm}
>{\centering\arraybackslash}m{1.2cm}
}
\toprule
\textbf{Target} &
\textbf{Round} &
\textbf{Key-Recovery Rounds} &
\textbf{Complexity} &
\textbf{Reference} \\
\midrule

\multirow{4}{*}{\makecell[c]{New\\Structure IV}}
& 5 & $r > 9$
& $\frac{2n+(r-9)n}{2}$
& \cite{Bo2021} \\
& 6 & $r > 10$
& $\frac{2n+(r-10)n}{2}$
& \cite{Kun2022} \\
& 7 & $r > 11$
& $\frac{3n+(r-11)n}{2}$
& \cite{Li2024} \\

& 8 & $r > 12$
& $\frac{3n+(r-12)n}{2}$
& Section 3 \\

\midrule

\multirow{3}{*}{LBlock}
& 8 & $r > 14$
& $38+(r-14)\times16$
& \cite{Xiang2024} \\

& 8 & $r > 14$
& $38+(r-14)\times16$
& \cite{Liu2025} \\

& 9 & $r > 15$
& $38+(r-15)\times16$
& Section 4 \\

\bottomrule
\end{tabular}
\end{table}

To solve these two challenges, this paper focuses on the security analysis of New Structure IV and LBlock in the quantum setting. Specifically, we construct an 8-round quantum distinguisher for New Structure IV under the qCPA model. Based on this, we present a quantum key-recovery attack on the \(r(r>12)\)-round New Structure IV. The attack achieves a time complexity of \(O(2^{((r-12)n+3n)/2})\), where $n$ denotes the subkey length. This reduces the time complexity by a factor of \(2^{n/2}\) compared with the previous best-known quantum key-recovery attack for this structure under the qCPA model.
We further extend our attack framework to the LBlock cipher as a concrete instantiation. For LBlock, we design a 9-round quantum distinguisher and perform a general quantum key-recovery attack on its \(r(r>15)\)-round version, with a time complexity of \(O(2^{38+(r-15)\times 16})\). Table~\ref{table1} provides a comparison of this work with existing quantum attacks on New Structure IV and LBlock. Our results outperform previous works in terms of both the number of rounds covered by the distinguisher and the overall time complexity of the attacks. Furthermore, we conduct a thorough analysis of the round function and the key scheduling of LBlock. Based on its structural properties, we reduce the subkey guessing space from 76 bits to 52 bits in the 15-round quantum key-recovery attack, thereby lowering the time complexity to \(O(2^{26})\).In the practical attack scenario against 15-round LBlock, the proposed attack reduces the complexity by a factor of $2^{14}$ compared with a Grover-based exhaustive search.

\section{Preliminaries}
In this section, we present the necessary preliminaries, including fundamental definitions and relevant quantum algorithms.

\subsection{Notation}
Let \( E : \{0,1\}^n \times \{0,1\}^m \to \{0,1\}^n \) be a block cipher with block size $n$ and key size $m$. For a given key \( k \in \{0,1\}^m \), we denote by \( E_k(x) \) the encryption of a plaintext \( x \in \{0,1\}^n \).

Let \( f:\{0,1\}^n \to \{0,1\}^n \) be a function. In the quantum setting, \( f \) can be implemented as a unitary operator \( U_f \), defined as
\[
U_f : \sum_{x,y} \alpha_{x,y} \lvert x \rangle \lvert y \rangle \;\longmapsto\; \sum_{x,y} \alpha_{x,y} \lvert x \rangle \lvert y \oplus f(x) \rangle,
\]
where \( \alpha_{x,y}\in\mathbb C \) satisfy
$
\sum_{x,y}|\alpha_{x,y}|^2=1.
$

For an \(r\)-round, \(d\)-branch iterated structure, \( x_j^i \) represents the output of the ($j + 1$)-th branch after the $i$-th round, where \(0 \le j \le d-1\) and \(1\le i\le r\).

\subsection{Grover algorithm}
Grover algorithm is a quantum search algorithm for finding a unique marked element in an unstructured dataset. It iteratively applies a phase oracle and a diffusion operator to amplify the probability amplitude of the target state, thereby achieving a quadratic speedup from \( O(N) \) to \( O(\sqrt{N}) \), where $N$ denotes the size of the search space.

\begin{definition}
(Grover \cite{Grover1996}) Consider a search space of size $N=2^n$. Let $f: \{0,1\}^n \to \{0,1\}$ be a Boolean function such that there exists a unique $x' \in \{0,1\}^n$ satisfying $f(x')=1$, while $f(x)=0$ for all $x\neq x'$. The goal is to identify $x'$.
\end{definition}

Assume access to a quantum oracle $\mathcal{O}$ implementing $f$ as a phase oracle, i.e.,
$\mathcal{O} \lvert x \rangle = (-1)^{f(x)} \lvert x \rangle$.  The quantum circuit of Grover algorithm is shown in Fig. \ref{Grover}.
\begin{figure}
    \centering
    \includegraphics[width=1\linewidth]{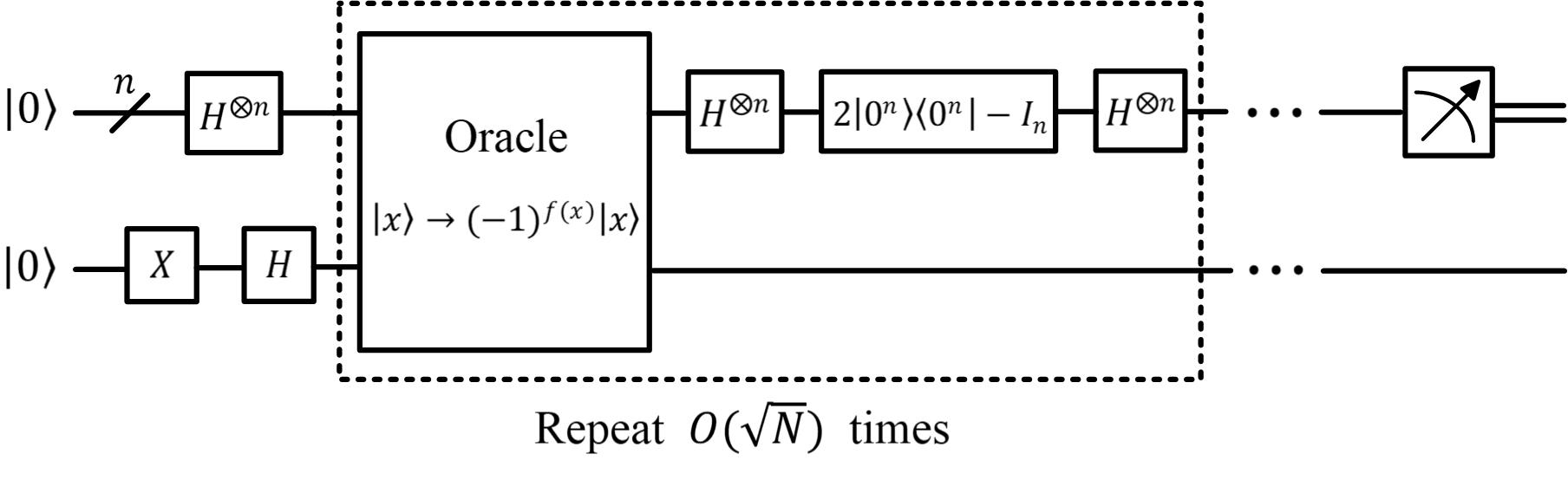}
    \caption{Quantum circuit of Grover algorithm.}
    \label{Grover}
\end{figure}
The steps of Grover algorithm are summarized in Algorithm~\ref{algorithm1}.

\begin{algorithm}[htbp]
\caption{The process of Grover algorithm}
\algrenewcommand\alglinenumber[1]{\normalsize #1.}
\begin{algorithmic}[1]
\State Prepare the initial quantum state $| 0^{\otimes n} \rangle$.
\State  Perform a Hadamard transform $H^{\otimes n} $ on the register to obtain
\[
|\psi\rangle=H^{\otimes n}| 0^{\otimes n} \rangle=\frac{1}{\sqrt{2^{n}}} \sum_{x \in\{0,1\}^{n}}|x\rangle.
\]
\State Apply Grover iteration for $R (R\approx\frac{\pi}{4} \sqrt{2^{n}})$ times to yield
\[
[(2|\psi\rangle\langle\psi|-I) \mathcal{O}]^{R}|\psi\rangle \approx\left|x'\right\rangle.
\]
Here, $\mathcal{O}$ is the phase oracle, and $(2|\psi\rangle\langle\psi|-I)$ represents the diffusion operator.
\State Measure the $n$-qubit to get $x'$.
\end{algorithmic}
\label{algorithm1}
\end{algorithm}

Furthermore, Quantum Amplitude Amplification (QAA), a generalization of Grover algorithm, is given in the following theorem.

\begin{theorem}\textnormal{(Brassard \cite{Brassard2002})}
Let $\mathcal{A}$ be any quantum algorithm acting on $q$ qubits without intermediate measurements. Let $\mathcal{B}: \{0,1\}^q \to \{0,1\}$ be a Boolean function that classifies the outcomes of $\mathcal{A}$ as good or bad states. Let $p>0$ denote the initial success probability of obtaining a good state when measuring $\mathcal{A}|0\rangle$.
Set $k = \left\lceil \frac{\pi}{4\theta} \right\rceil$, where $\theta$ is defined as $\sin^2(\theta) = p$.
Define the unitary operator
\[
Q = -\mathcal{A}S_0\mathcal{A}^{-1}S_{\mathcal{B}},
\]
where $S_{\mathcal{B}}$ flips the phase of all good states, i.e.,
\[
|x\rangle \mapsto
\begin{cases}
-|x\rangle, & \text{if } \mathcal{B}(x)=1 \\
|x\rangle, & \text{otherwise}
\end{cases},
\tag{3}
\]
and $S_0$ changes the phase of the zero state $\lvert 0\rangle$.
After applying $Q^k$ to $\mathcal{A}|0\rangle$, a measurement yields a good state with probability at least $\max\{1-p, p\}$.
\end{theorem}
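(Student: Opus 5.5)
The plan is to follow the standard two-dimensional analysis of amplitude amplification from \cite{Brassard2002}, applied directly to the operator $Q=-\mathcal{A}S_0\mathcal{A}^{-1}S_{\mathcal{B}}$ and the iteration count $k=\lceil \pi/(4\theta)\rceil$ fixed in the statement.

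\textbf{Step 1 (invariant plane).} Write $\mathcal{A}|0\rangle=\sin\theta\,|\psi_1\rangle+\cos\theta\,|\psi_0\rangle$, where $|\psi_1\rangle$ and $|\psi_0\rangle$ are the normalized projections of $\mathcal{A}|0\rangle$ onto the good and bad subspaces determined by $\mathcal{B}$. This is consistent with $\sin^2\theta=p$, and $p>0$ makes $|\psi_1\rangle$ well defined; if $p=1$ the state is already good. Next, $S_{\mathcal{B}}$ acts as the reflection $I-2|\psi_1\rangle\langle\psi_1|$ on $\mathrm{span}\{|\psi_1\rangle,|\psi_0\rangle\}$, and $-\mathcal{A}S_0\mathcal{A}^{-1}=2\mathcal{A}|0\rangle\langle 0|\mathcal{A}^{-1}-I$ is the reflection about $\mathcal{A}|0\rangle$. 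Their product $Q$ therefore preserves this plane and acts on it as a rotation by $2\theta$. By induction on $j$,
\[
Q^{j}\mathcal{A}|0\rangle=\sin((2j+1)\theta)|\psi_1\rangle+\cos((2j+1)\theta)|\psi_0\rangle ,
\]
so measuring after $k$ iterations yields a good state with probability $\sin^2((2k+1)\theta)$.

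\textbf{Step 2 (placing the angle).} Set $m=\pi/(4\theta)$ and $k=\lceil m\rceil$. Then $(2k+1)\theta=\pi/2+\delta$ with $\delta=(2(k-m)+1)\theta\in[\theta,3\theta)$, so the success probability equals $\cos^2\delta$. The remaining task is to compare $\cos^2\delta$ with $\max\{1-p,p\}=\max\{\cos^2\theta,\sin^2\theta\}$.

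\textbf{Step 3 (the main obstacle).} This comparison is where I expect the argument to break down.
\begin{itemize}
\item Since $\delta\ge\theta$ and $\delta\le\pi/2$ for small $\theta$, we get $\cos^2\delta\le\cos^2\theta=1-p$. Equality holds exactly when $m$ is an integer. So with the ceiling, the bound $1-p$ holds only in that case.
\item In general one only obtains $\cos^2\delta>\cos^2(3\theta)\ge 1-9p$ for $\theta\le\pi/6$. This is the honest bound the plan can deliver, and it is still $1-O(p)$, which suffices for every later use in the paper.
\item For large $p$ the stated bound actually fails. Take $p=3/4$: then $\theta=\pi/3$, $k=1$, and $\sin^2(3\theta)=0<\max\{1-p,p\}$.
\end{itemize}

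Consequently, the step I would attempt first, namely proving $\cos^2\delta\ge\max\{1-p,p\}$ directly, cannot succeed for all $p$. The proof would instead establish the exact formula $\sin^2((2k+1)\theta)$ together with the bound $\ge\cos^2(3\theta)$. It would then record that the clean bound $\max\{1-p,p\}$ in \cite{Brassard2002} requires reading $k$ as $\lfloor\pi/(4\theta)\rfloor$. The inequality as literally stated with the ceiling holds only in the integer case.
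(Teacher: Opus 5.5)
Your analysis is correct. The paper gives no proof to compare it with: the theorem is only quoted from Brassard et al., with their $\lfloor\cdot\rfloor$ transcribed as $\lceil\cdot\rceil$, and it is never proved or used quantitatively later.

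Your Step~1 is the standard invariant-plane argument, and Step~2 is right. Your conclusion in Step~3 is also right: with $k=\lceil \pi/(4\theta)\rceil$ the claimed bound is false.
\begin{itemize}
\item Your $p=3/4$ example checks out: $\theta=\pi/3$, $k=1$, and the success probability is $\sin^2\pi=0$.
\item Your first bullet already gives small-$p$ counterexamples. For instance, $\theta=\pi/10$ gives $k=3$ and success probability $\sin^2(7\pi/10)\approx 0.65$, below $1-p\approx 0.90$.
\item One small correction to your closing sentence: the ceiling version holds exactly when $\pi/(4\theta)\in\mathbb{Z}$ \emph{or} $p=1$. In the case $p=1$ you get $k=1$ and $\sin^2(3\pi/2)=1$, with $\pi/(4\theta)=1/2$.
\end{itemize}

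The original Theorem~2 of Brassard et al.\ takes $k=\lfloor \pi/(4\theta)\rfloor$. You should prove that version rather than stop at the weaker $1-9p$ bound, since it follows at once from your Step~2.
\begin{itemize}
\item With the floor, $k-m\in(-1,0]$, so $\delta\in(-\theta,\theta]$.
\item Since $|\delta|\le\theta\le\pi/2$, the success probability satisfies $\cos^2\delta\ge\cos^2\theta=1-p$.
\item If $p>1/2$, then $\theta>\pi/4$, so $m<1$ and $k=0$. The success probability is then exactly $p$.
\end{itemize}
Together these give the bound $\max\{1-p,p\}$. The right write-up is therefore: your exact formula $\sin^2((2k+1)\theta)$, this short proof of the floor version, and your counterexample showing that the statement as printed needs the correction.
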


\subsection{Simon algorithm}

Simon algorithm is a fundamental quantum algorithm for identifying a hidden XOR period of a two-to-one function. It achieves an exponential speedup over classical algorithms.

\begin{definition}[Simon \cite{Simon1997}]
Let $f:\{0,1\}^n \rightarrow \{0,1\}^n$ be a function. Assume that there exists a unique non-zero period $s \in \{0,1\}^n$ satisfying
\[
f(x) = f(y) \iff x \oplus y \in \{0^n, s\}, \quad \forall x,y \in \{0,1\}^n.
\]
The goal is to find the hidden period $s$.
\end{definition}

\begin{algorithm}[htbp]
\caption{The process of Simon algorithm}
\algrenewcommand\alglinenumber[1]{\normalsize #1.}
\begin{algorithmic}[1]
\State Prepare two $n$-qubit registers initialized to $|0\rangle^{\otimes n}|0\rangle^{\otimes n}$.
\State Apply the Hadamard transform $H^{\otimes n}$ to the first register to obtain
\[
|\psi_{1}\rangle=\frac{1}{\sqrt{2^{n}}} \sum_{x \in \{0,1\}^{n}} |x\rangle |0^n\rangle.
\]
\State Query the oracle $U_f$, which maps the state to
\[
|\psi_{2}\rangle=\frac{1}{\sqrt{2^{n}}} \sum_{x \in \{0,1\}^{n}} |x\rangle |f(x)\rangle.
\]
\State Measure the second register to obtain $f(x_0)$. The first register collapses to
\[
|\psi_{3}\rangle=\frac{1}{\sqrt{2}}\left(|x_{0}\rangle + |x_{0} \oplus s\rangle\right).
\]
\State Apply the Hadamard transform $H^{\otimes n}$ to the first register again to obtain
\[
|\psi_{4}\rangle=\frac{1}{\sqrt{2^{n+1}}} \sum_{y \in \{0,1\}^{n}}
\left[(-1)^{y \cdot x_{0}} + (-1)^{y \cdot (x_{0} \oplus s)}\right] |y\rangle,
\]
where $(-1)^{y \cdot x_{0}}+(-1)^{y \cdot\left(x_{0} \oplus s\right)}=(-1)^{y \cdot x_{0}}(1+(-1)^{y \cdot s})$. The coefficient is non-zero if and only if $y \cdot s=0$.
\State Measure the first register to obtain a string $y$.
\end{algorithmic}
\label{alg:simon}
\end{algorithm}

The best known classical algorithm for solving this problem requires $O(2^{n/2})$ time. In contrast, Simon algorithm recovers the hidden period $s$ with only $O(n)$ quantum queries. The quantum circuit of Simon algorithm is shown in Fig. \ref{Simon}, and the corresponding procedure is summarized in Algorithm~\ref{alg:simon}.
\begin{figure}
    \centering
    \includegraphics[width=0.9\linewidth]{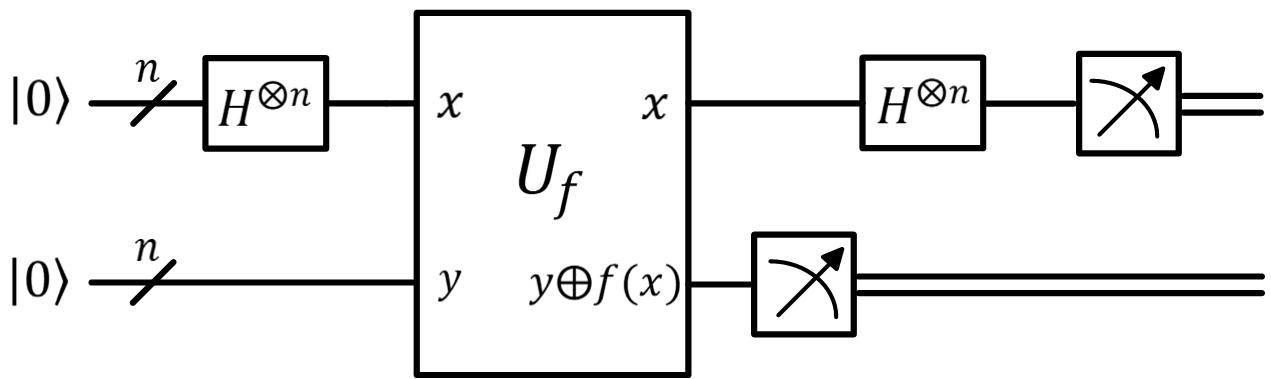}
    \caption{Quantum circuit of Simon algorithm.}
    \label{Simon}
\end{figure}
Repeating the procedure $O(n)$ times yields $(n-1)$ linearly independent equations of the form $y \cdot s = 0$, from which $s$ can be recovered via Gaussian elimination.

However, in practical cryptographic constructions, a function derived from a cipher may exhibit additional collisions and thus fail to strictly satisfy the two-to-one promise of the Simon problem. To address this issue, Kaplan et al.~\cite{Kaplan2016} introduced a relaxed version of Simon problem by defining the parameter
\[
\varepsilon(f, s)=\max_{t \in \{0,1\}^n \backslash \{0^n, s\}} \operatorname{Pr}_x[f(x) = f(x \oplus t)],
\]
which quantifies how far the function deviates from satisfying Simon promise. They proved that if there exists a constant $p_0<1$ such that $\varepsilon(f,s) \le p_0$, then the Simon algorithm can recover the period $s$ using $cn$ oracle queries with probability at least
$
1 - ( 2 (\frac{1+p_0}{2} )^c )^n
$, where $c$ is a constant.

\subsection{Grover-meets-Simon algorithm}

At Asiacrypt 2017, Leander et al.~\cite{Leander2017} proposed the Grover-meets-Simon algorithm to attack the FX construction. They showed that key whitening does not significantly improve security in the quantum setting.

\begin{definition}[Leander~\cite{Leander2017}]
Let $f:\{0,1\}^{m} \times \{0,1\}^{n} \rightarrow \{0,1\}^{i}$ be a function, where $m=O(n)$.
Assume that there exist a unique index
$i_0\in\{0,1\}^m$
and a non-zero period
$s\in\{0,1\}^n$
such that
\[
f(i_0,x)=f(i_0,x\oplus s),
\quad \forall x\in\{0,1\}^n.
\]
The goal is to find $i_0$ and the hidden period $s$.
\end{definition}

The FX construction is defined as $\operatorname{Enc}(x)=E_{k_0}(x \oplus k_1) \oplus k_2$, as illustrated in Fig.~\ref{FX}.
\begin{figure}
    \centering
    \includegraphics[width=0.65\linewidth]{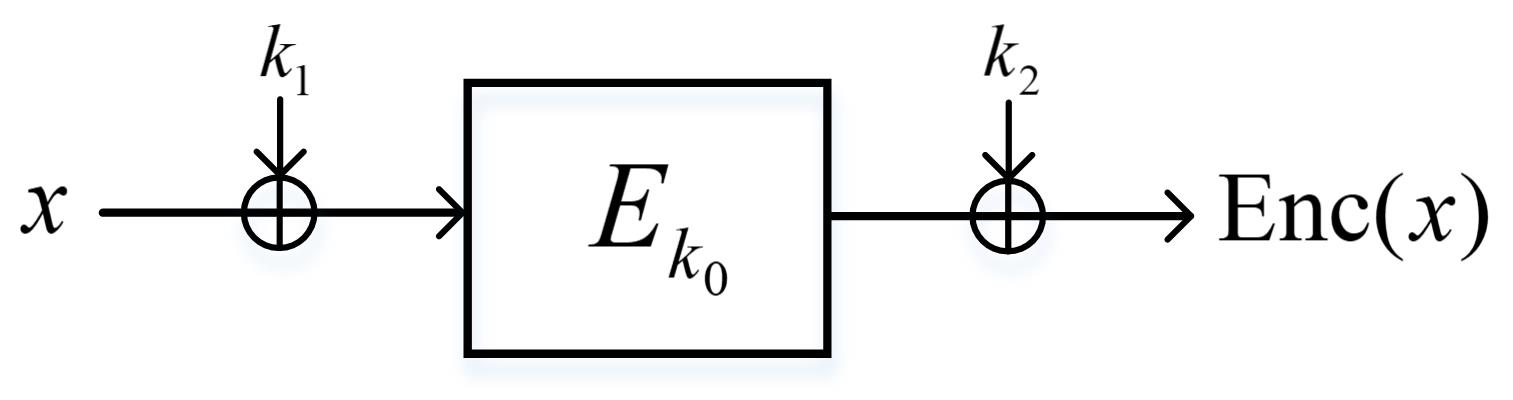}
    \caption{FX structure.}
    \label{FX}
\end{figure}
To recover the key $k_1$, Leander et al. considered the following function
\[
f(k,x) = E_{k_0}(x \oplus k_1) \oplus k_2 \oplus E_k(x).
\]
Note that when $k = k_0$, the function satisfies $f(k_0, x) = f(k_0, x \oplus k_1)$ for all $x \in \{0,1\}^n$. Therefore, \(f(k_0,\cdot)\) has the hidden period \(k_1\). In contrast, for any $k \neq k_0$, the function \(f(k,\cdot)\) does not exhibit this periodicity. The attack proceeds as follows.
\begin{enumerate}
    \item Perform an exhaustive search over the key space using Grover algorithm.
    \item For each candidate $k'$, apply Simon algorithm to determine whether the function $f(k', \cdot)$ has a hidden period. If $k' = k_0$, then $f(k', \cdot)$ is periodic with period $k_1$, and Simon algorithm can efficiently distinguish it. Otherwise, no such period exists, and the candidate $k'$ is rejected.
    \item Once $k_0$ and $k_1$ have been recovered, $k_2$ can be easily computed.
\end{enumerate}

\section{Quantum attack on New Structure IV}\label{sec3}

In this section, we study the quantum security of New Structure IV. Under the qCPA model, we construct an 8-round quantum distinguisher and propose a quantum key-recovery attack against $r(r>12)$-round New Structure IV.
\par \subsection{A Brief Description of New Structure IV}
Wu et al. \cite{Wu2011} proposed four types of generalized unbalanced Feistel structures, namely New Structure I--IV. Among them, we focus on New Structure IV, whose structure is shown in Fig. \ref{iNew}. The block size of New Structure IV is \(4n\) bits, and the input is divided into four $n$-bit branches. In each round, $F^i$ denotes the function with subkey $k^i$.
\begin{figure}
    \centering
    \includegraphics[width=0.6\linewidth]{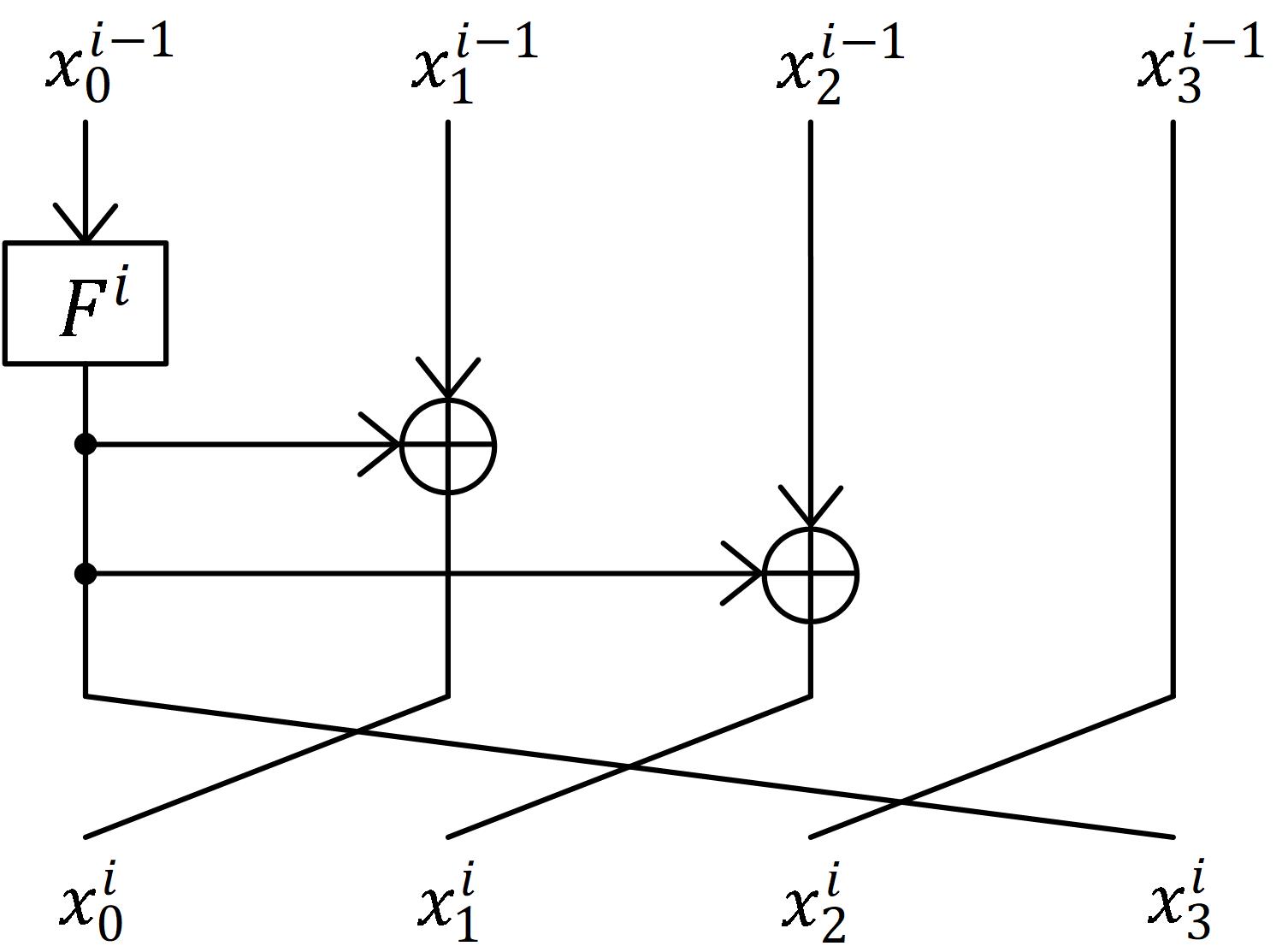}
    \caption{The $i$-th round of New Structure IV.}
    \label{iNew}
\end{figure}

Let $(x_{0}^{i-1}, x_{1}^{i-1}, x_{2}^{i-1}, x_{3}^{i-1}) \in (\{0,1\}^n)^4$ be the input to the $i$-th round of New Structure IV. The corresponding output can be expressed as
\[
\begin{aligned}
&(x_{0}^{i}, x_{1}^{i}, x_{2}^{i}, x_{3}^{i})\\
&=\big(x_{1}^{i-1} \oplus F^{i}(x_{0}^{i-1}),\; x_{2}^{i-1} \oplus F^{i}(x_{0}^{i-1}), 
 x_{3}^{i-1},\; F^{i}(x_{0}^{i-1})\big).
\end{aligned}
\]
It can be observed that the transformation is mainly determined by $F^{i}(x_0^{i-1})$, which will be exploited in our quantum distinguisher and quantum key-recovery attacks.

Under the qCPA model, Yu et al. \cite{Bo2021} first constructed a 5-round quantum distinguisher for New Structure IV. Based on this distinguisher, they further achieved a 9-round quantum key-recovery attack. Later, Zou et al. \cite{Kun2022} improved the quantum distinguishing attack and extended the distinguisher to 6 rounds. More recently, Li et al. \cite{Li2024} proposed a 7-round quantum distinguisher for New Structure IV.
In this paper, we further advance this line of research. We extend the distinguisher to 8 rounds under the qCPA model and present an $r(r>12)$-round quantum key-recovery attack on New Structure IV. Compared with the previous best-known attack, our time complexity is reduced by a factor of \(2^{n/2}\).

\subsection{8-round quantum distinguisher for New Structure IV }

For the 8-round New Structure IV (Fig. \ref{8New}), we design a periodic function \( f^{\mathcal{O}_1}(x) \) such that Simon algorithm can distinguish it from a random permutation in \( O(n) \) time, where \( n \) denotes the subkey length.

Let $\alpha_0, \alpha_1, c \in \{0,1\}^n$ be arbitrary constants with $\alpha_0 \neq \alpha_1$. Given the 8-round New Structure IV encryption oracle $\mathcal{O}_1$, consider inputs of the form $(x_0^0, x_1^0, x_2^0, x_3^0) = (c, \alpha_b, x, x)$, where $b \in \{0,1\}$. Denote the corresponding ciphertext after 8 rounds by
\[
(x_0^8, x_1^8, x_2^8, x_3^8) = \mathcal{O}_1(c, \alpha_b, x, x).
\]
For each $b \in \{0,1\}$, define
\begin{equation}
 y_1^{(\alpha_b)}=x_0^8(\alpha_b) \oplus x_2^8(\alpha_b) \oplus x_3^8(\alpha_b).
\label{Eq1}
\end{equation}
Here, \( x_i^j(\alpha_b) \) denotes the output of the \((i+1)\)-th branch after the \( j \)-th round of New Structure IV with plaintext \( (c, \alpha_b, x, x)\), where \( i \in \{0,1,2,3\} \) and $j$ denotes the round index.

Based on this, we define the function
\[
\begin{aligned}
f^{\mathcal{O}_1} : \{0,1\}^n &\rightarrow \{0,1\}^n \\
x &\mapsto y_1^{(\alpha_0)} \oplus y_1^{(\alpha_1)}.
\end{aligned}
\]
To further characterize the structural properties of this function, we present Lemma~\ref{Lemma1} as follows.

\begin{figure}
    \centering
    \includegraphics[width=1.0\linewidth]{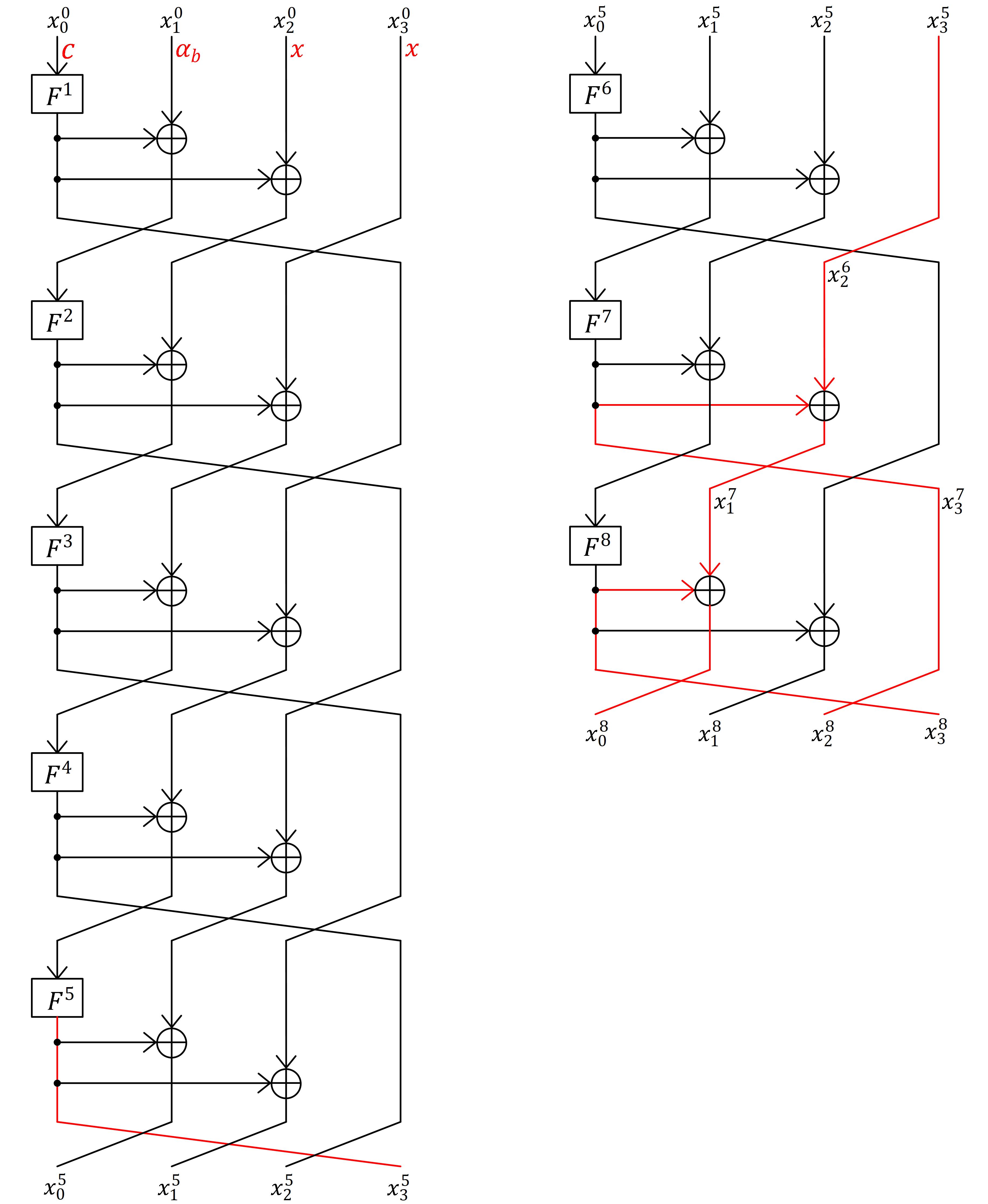}
    \caption{8-round quantum distinguisher for New Structure IV.}
    \label{8New}
\end{figure}

\begin{Lemma}\label{Lemma1}
Under the qCPA model, let $\mathcal{O}_1$ be the encryption oracle of 8-round New Structure IV. For any \(x \in \{0,1\}^n\), the function \(f^{\mathcal{O}_1}(x)\) satisfies
\[
f^{\mathcal{O}_1}(x) = f^{\mathcal{O}_1}\bigl(x \oplus F^2(\alpha_0 \oplus F^1(c)) \oplus F^2(\alpha_1 \oplus F^1(c))\bigr),
\]
where \(c\) is the fixed constant used for the initial input $x_0^0$. That is, \(f^{\mathcal{O}_1}\) has period
$
s = F^2(\alpha_0 \oplus F^1(c)) \oplus F^2(\alpha_1 \oplus F^1(c)).
$
\end{Lemma}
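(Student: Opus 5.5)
The plan is to trace the encryption of a plaintext $(c,\alpha_b,x,x)$ symbolically through all eight rounds, using the round map $(a,b',c',d)\mapsto(b'\oplus F^i(a),\,c'\oplus F^i(a),\,d,\,F^i(a))$. The aim is to show that $y_1^{(\alpha_b)}$ has the form $h(x\oplus\beta_b)$, where $h$ is one function independent of $b$ and $\beta_b=F^2(\alpha_b\oplus F^1(c))$. The period of $f^{\mathcal{O}_1}$ then follows from the standard argument for a sum of two shifted copies of the same function.

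\textbf{First two rounds.} Write $F_1=F^1(c)$, which is a constant because the first branch is fixed to $c$. After round 1 the state is $(\alpha_b\oplus F_1,\;x\oplus F_1,\;x,\;F_1)$. Round 2 applies $F^2$ to $\alpha_b\oplus F_1$, producing $\beta_b$. The state becomes
\[
(x\oplus\beta_b\oplus F_1,\;x\oplus\beta_b,\;F_1,\;\beta_b).
\]
From now on I substitute $z=x\oplus\beta_b$. Every later branch value should depend on $x$ and $b$ only through $z$, except the copies of $\beta_b$ that are carried along in the branches.

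\textbf{Rounds 3 to 8.} I introduce abbreviations $A,B,C,D,E,G$ for the successive outputs of $F^3,\dots,F^8$:
\begin{itemize}
\item $A=F^3(z\oplus F_1)$;
\item $B=F^4(z\oplus A)$;
\item $C=F^5(F_1\oplus A\oplus B)$;
\item $D=F^6(\beta_b\oplus B\oplus C)$;
\item $E=F^7(A\oplus C\oplus D)$;
\item $G=F^8(B\oplus D\oplus E)$.
\end{itemize}
The ciphertext should come out as $(C\oplus E\oplus G,\;D\oplus G,\;E,\;G)$. The key cancellation is then
\[
x_0^8\oplus x_2^8\oplus x_3^8=C.
\]
The extra $\beta_b$ that enters through $D$ does not survive, because $D$ and everything computed after it cancel out of $y_1^{(\alpha_b)}$.

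Hence $y_1^{(\alpha_b)}=h(x\oplus\beta_b)$ with
\[
h(z)=F^5\bigl(F_1\oplus F^3(z\oplus F_1)\oplus F^4(z\oplus F^3(z\oplus F_1))\bigr),
\]
and $h$ does not depend on $b$.

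\textbf{Concluding the period.} Now $f^{\mathcal{O}_1}(x)=h(x\oplus\beta_0)\oplus h(x\oplus\beta_1)$. Substituting $x\oplus s$ with $s=\beta_0\oplus\beta_1$ exchanges the two summands, so the value is unchanged. This is exactly the identity claimed in Lemma~\ref{Lemma1}.

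The main obstacle is the bookkeeping over eight rounds. The places that need care are:
\begin{itemize}
\item checking that the third and fourth output branches really are the most recent $F$-outputs $E$ and $G$, so that they cancel against the first branch;
\item checking that no stray $\beta_b$ survives inside the argument of $F^5$ (which would break $b$-independence of $h$), since the $\beta_b$ branch only re-enters at round 6.
\end{itemize}
I would verify this round by round against Fig.~\ref{8New} before asserting the closed form.
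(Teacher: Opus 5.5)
Your proposal is correct and follows essentially the same route as the paper: both reduce $y_1^{(\alpha_b)}$ to $x_3^5(\alpha_b)$, a single $b$-independent function of $x\oplus F^2(\alpha_b\oplus F^1(c))$, and then observe that shifting $x$ by $s$ swaps the two summands of $f^{\mathcal{O}_1}$. Your full eight-round trace, giving the ciphertext $(C\oplus E\oplus G,\,D\oplus G,\,E,\,G)$, checks out; it is an explicit version of the paper's chain $x_0^8\oplus x_2^8\oplus x_3^8=x_1^7\oplus x_3^7=x_2^6=x_3^5$, and your $h$ agrees with the paper's closed form for $x_3^5$.
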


\begin{proof}
For each $b\in\{0,1\}$, we query the 8-round New Structure IV encryption oracle $\mathcal{O}_1$ with plaintext 
$(c,\alpha_b, $ $ x, x).$ 
From the structure in Fig.~\ref{8New}, we have
\[
x_0^8 \oplus x_2^8 \oplus x_3^8 = x_1^7 \oplus x_3^7 = x_2^6 = x_3^5.
\]
By Eq.~\eqref{Eq1}, we further obtain $y_1^{(\alpha_b)} =x_3^5(\alpha_b)$. This implies that the function \(f^{\mathcal{O}_1}(x)\) depends only on the output of the 4th branch at the 5th round. Consequently, \(f^{\mathcal{O}_1}(x)\) can be equivalently expressed as
\[
f^{\mathcal{O}_1}(x) = x_3^5(\alpha_0) \oplus x_3^5(\alpha_1).
\]

By tracing the first five rounds of $\mathcal{O}_1$, we obtain
\[
\begin{aligned}
x_3^5(\alpha_b)
&= F^5\bigl(
F^1(c) \oplus F^3(x \oplus F^1(c) \oplus F^2(\alpha_b \oplus F^1(c))) \\
&\qquad \oplus F^4(x \oplus F^2(\alpha_b \oplus F^1(c)) \\
&\qquad \oplus F^3(x \oplus F^1(c) \oplus F^2(\alpha_b \oplus F^1(c))))
\bigr).
\end{aligned}
\]
Therefore, the function \(f^{\mathcal{O}_1}(x)\) can be further written as
\[
\begin{aligned}
f&^{\mathcal{O}_1}(x)\\ &= y_1^{(\alpha_0)} \oplus y_1^{(\alpha_1)} \\
&= x_3^5(\alpha_0) \oplus x_3^5(\alpha_1) \\
&= F^5\bigl(F^1(c) \oplus F^3(x \oplus F^1(c) \oplus F^2(\alpha_0 \oplus F^1(c)))\\
&\qquad\oplus F^4(x \oplus F^2(\alpha_0 \oplus F^1(c))
  \oplus F^3(x \oplus F^1(c)\\
  &\qquad \oplus F^2(\alpha_0 \oplus F^1(c))))\bigr)
 \oplus F^5\bigl(F^1(c) \oplus F^3(x \oplus\\
 &\qquad  F^1(c) \oplus F^2(\alpha_1 \oplus F^1(c)))
  \oplus F^4(x \oplus F^2(\alpha_1 \oplus\\
  &\qquad F^1(c)) \oplus F^3(x \oplus F^1(c) \oplus F^2(\alpha_1 \oplus F^1(c))))\bigr).
\end{aligned}
\]
The function \(f^{\mathcal{O}_1}(x)\) has a period \(s\), since it satisfies
\[
\begin{aligned}
f&^{\mathcal{O}_1}\bigl(x \oplus F^2(\alpha_0 \oplus F^1(c)) \oplus F^2(\alpha_1 \oplus F^1(c))\bigr)\\
&= F^5\bigl(F^1(c) \oplus F^3(x\oplus F^2(\alpha_1 \oplus F^1(c)) \oplus F^1(c))\\
&\quad  \oplus F^4(x \oplus F^2(\alpha_1 \oplus F^1(c)) \oplus F^3(x \oplus F^2(\alpha_1 \oplus F^1(c))\\
&\quad \oplus F^1(c)) )\bigr) \oplus F^5\bigl(F^1(c) \oplus F^3(x\oplus F^2(\alpha_0 \oplus F^1(c)) \oplus \\
&\quad F^1(c) )  \oplus F^4(x \oplus F^2(\alpha_0 \oplus F^1(c)) \oplus F^3(x\oplus F^2(\alpha_0 \oplus\\
&\quad  F^1(c)) \oplus F^1(c)) )\bigr) \\
&= f^{\mathcal{O}_1}(x),
\end{aligned}
\]
where $s = F^2(\alpha_0 \oplus F^1(c)) \oplus F^2(\alpha_1 \oplus F^1(c))$. This completes the proof of Lemma~\ref{Lemma1}.
\end{proof}
Through a rigorous mathematical proof, we establish the periodicity of the function \(f^{\mathcal{O}_1}\). Therefore, under the qCPA model, we can construct an 8-round quantum distinguisher for New Structure IV. By applying Simon algorithm, the period \(s\) can be efficiently recovered in \(O(n)\) time.

\subsection{Quantum key-recovery attack on New Structure IV}
Based on the constructed 8-round quantum distinguisher, this section presents a quantum key-recovery attack on $r(r>12)$-round New Structure IV. Compared with the attack proposed by Li et al.~\cite{Li2024}, our attack reduces the query complexity by a factor of \(2^{n/2}\), where \(n\) denotes the subkey length.

We extend the 8-round distinguisher by appending four additional rounds, as shown in Fig.~\ref{12New}. The red dashed boxes denote the three \(n\)-bit subkeys of the round function that must be guessed. Therefore, the Grover search complexity is \(O(2^{3n/2})\). The attack procedure is as follows.

\begin{figure}
    \centering
    \includegraphics[width=0.5\linewidth]{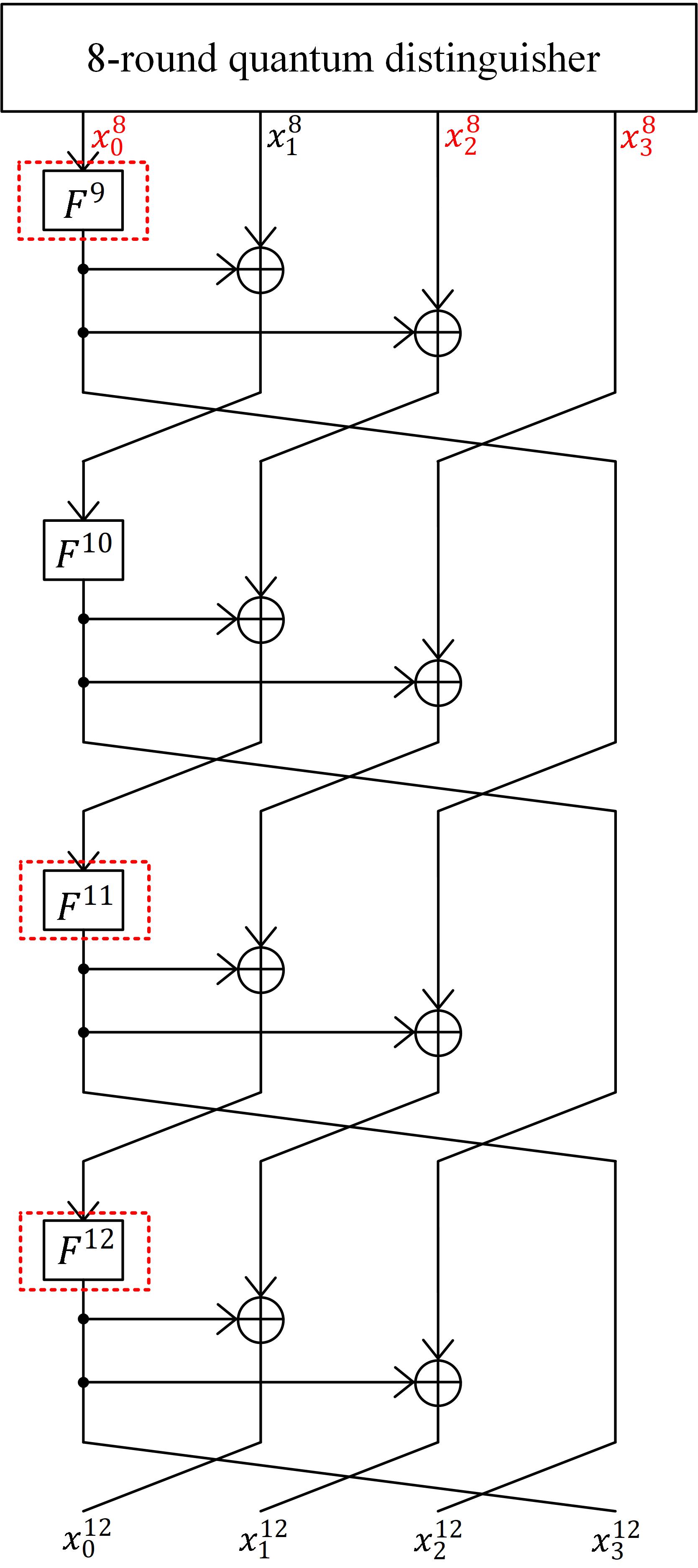}
    \caption{Quantum key-recovery attack on New Structure IV.}
    \label{12New}
\end{figure}

(1) Query the 12-round New Structure IV encryption oracle \(\mathcal{O}_2\) with the plaintext \((c,\alpha_b,x,x)\) to obtain the corresponding ciphertext \((x_0^{12}, x_1^{12}, x_2^{12}, x_3^{12})\). Then, partially decrypt the ciphertext using the subkeys of rounds 9, 11, and 12 to recover the intermediate state \((x_0^8, x_2^8, x_3^8)\).

(2) Apply Grover algorithm to search over the \(3n\)-bit subkeys $(k^9,k^{11},k^{12})$ of the last 4 rounds. For each candidate subkey tuple, we use the 8-round quantum distinguisher to verify its correctness. Only the correct subkey tuple yields the periodic function described in Lemma~\ref{Lemma1}, which can then be identified using Simon algorithm.

For the \(r(r>12)\)-round New Structure IV, Grover algorithm is used to search over a \(((r-12)n + 3n)\)-bit subkey space. Therefore, the target subkeys can be recovered with a time complexity of \(O\bigl(2^{((r-12)n+3n)/2}\bigr)\). Compared with the best previously known quantum key-recovery attack under the qCPA model, our attack reduces the query complexity by a factor of \(2^{n/2}\).

\section{Quantum attack on LBlock}\label{sec4}

In this section, we study the security of LBlock under the qCPA model. We first explore the internal structure of reduced-round LBlock and construct a 9-round quantum periodic distinguisher. Based on this distinguisher, we further perform a 15-round quantum key-recovery attack. By analyzing the key scheduling of LBlock, we derive nine sets of recurrence relations among the subkeys. These relations enable us to reduce the number of guessed subkey bits in the quantum key-recovery attack, thereby shrinking the Grover search space by 24 bits.

\subsection{Description of LBlock}
LBlock is a lightweight block cipher \cite{LBlock2011}. It adopts a variant of the Feistel structure and consists of 32 rounds. The encryption process is shown in Fig.~\ref{32LBlock}. In each round, the right branch performs an 8-bit cyclic shift operation to enhance the diffusion capability and overall security of the cipher. LBlock has a 64-bit block size and an 80-bit master key. Its decryption process is the inverse of the encryption process.

\begin{figure}
    \centering
    \includegraphics[width=0.4\linewidth]{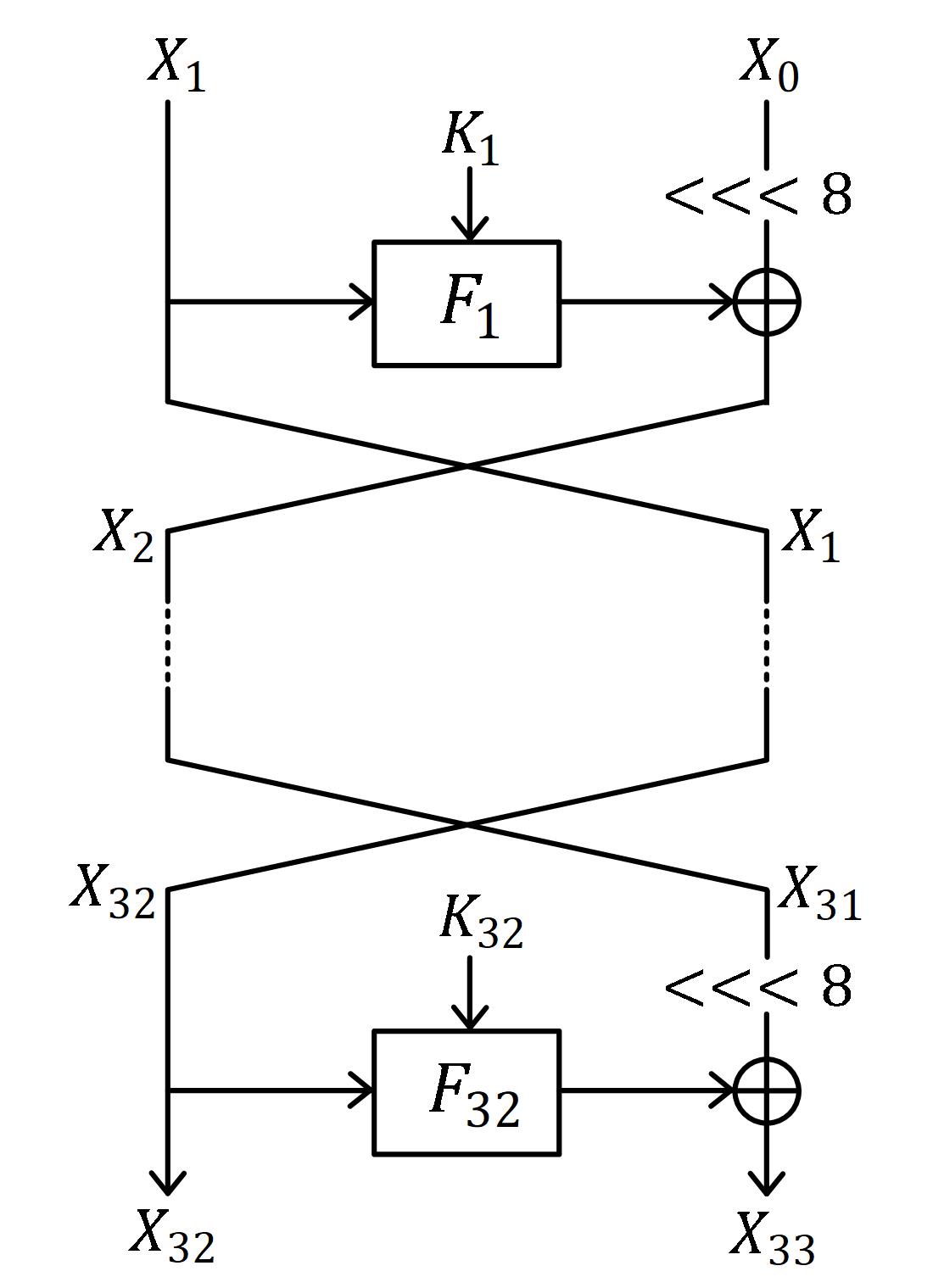}
    \caption{Encryption procedure of LBlock.}
    \label{32LBlock}
\end{figure}

LBlock comprises three components, namely the encryption algorithm, the decryption algorithm, and key scheduling. Specifically, the round function of the encryption algorithm consists of KeyAddition, Confusion Transformation (implemented using eight S-boxes), and Diffusion Transformation (implemented by the permutation $P$). The structure of the round function is shown in Fig. \ref{FunctionF}.

\begin{figure}
    \centering
    \hspace*{1cm}
    \includegraphics[width=0.8\linewidth]{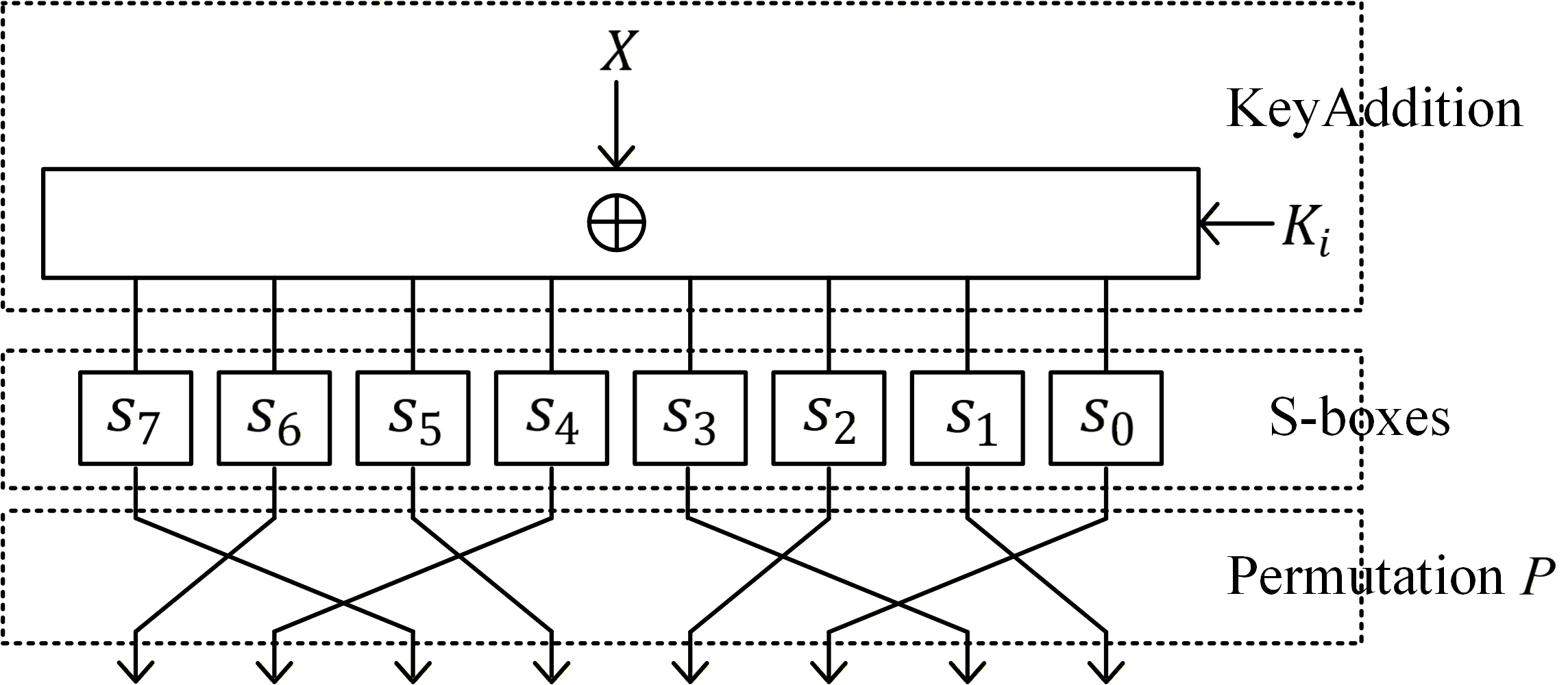}
    \caption{Round function F.}
    \label{FunctionF}
\end{figure}

In our quantum key-recovery attack, the internal structure of the round function need not be considered. Therefore, the round function is simply denoted by \(F_j\), where \(j \in \{0,1,\ldots,7\}\) is the index of the S-box.
Based on the structure of the \(i\)-th round of LBlock shown in Fig. \ref{oneLBlock}, the input-output relations among the branches are given as follows.
\[
\begin{aligned}
&x_0^i  \leftarrow F_6(x_1^{i-1}) \oplus x_{10}^{i-1},
\hspace{1.4cm}
x_1^i  \leftarrow F_4(x_3^{i-1}) \oplus x_{11}^{i-1}, \\
&x_2^i  \leftarrow F_7(x_0^{i-1}) \oplus x_{12}^{i-1},
\hspace{1.4cm}
x_3^i  \leftarrow F_5(x_2^{i-1}) \oplus x_{13}^{i-1}, \\
&x_4^i  \leftarrow F_2(x_5^{i-1}) \oplus x_{14}^{i-1},
\hspace{1.4cm}
x_5^i  \leftarrow F_0(x_7^{i-1}) \oplus x_{15}^{i-1}, \\
&x_6^i  \leftarrow F_3(x_4^{i-1}) \oplus x_8^{i-1},
\hspace{1.4cm}
x_7^i  \leftarrow F_1(x_6^{i-1}) \oplus x_9^{i-1}, \\
&x_8^i  \leftarrow x_0^{i-1},
\hspace{0.4cm}
x_9^i  \leftarrow x_1^{i-1},
\hspace{0.4cm}
x_{10}^i \leftarrow x_2^{i-1},
\hspace{0.35cm}
x_{11}^i \leftarrow x_3^{i-1}, \\
&x_{12}^i \leftarrow x_4^{i-1},
\hspace{0.28cm}
x_{13}^i \leftarrow x_5^{i-1},
\hspace{0.3cm}
x_{14}^i \leftarrow x_6^{i-1},
\hspace{0.3cm}
x_{15}^i \leftarrow x_7^{i-1}.
\end{aligned}
\]

\begin{figure}
    \centering
    \includegraphics[width=0.9\linewidth]{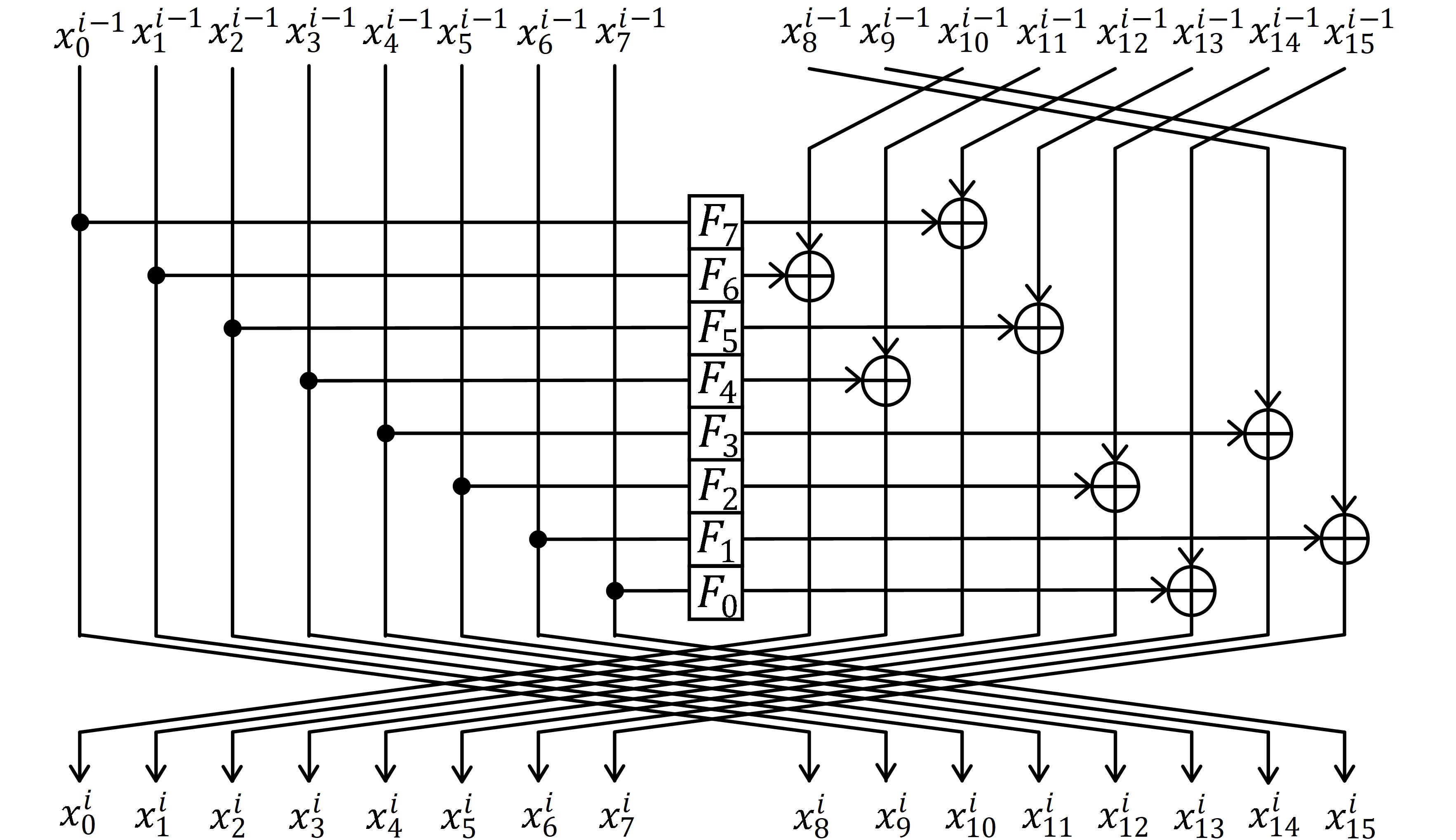}
    \caption{The $i$-th round of LBlock encryption.}
    \label{oneLBlock}
\end{figure}

The key scheduling algorithm of LBlock is inspired by the design principles of stream ciphers. It generates 32-bit round subkeys using only simple cyclic shifts and nonlinear substitutions. Specifically, the 80-bit master key is stored in a register, denoted by \( K = k_{79}k_{78}\cdots k_0 \). The leftmost 32 bits of the register are extracted as the initial round subkey \( K_1 \). For \( i = 1,2,\ldots,31 \), the key register \( K \) is updated as follows.

\begin{enumerate}
    \item Perform a 29-bit cyclic left shift on the entire register: \(K \leftarrow K \lll 29\).
    \item Apply \(s_9\) and \(s_8\) to the highest 8 bits. The specific forms of the S-boxes are given in Ref.~\cite{LBlock2011}.
    \vspace{-0.2em}
    \begin{align*}
    [k_{79}\,k_{78}\,k_{77}\,k_{76}] &\leftarrow s_9([k_{79}\,k_{78}\,k_{77}\,k_{76}]),\\
    [k_{75}\,k_{74}\,k_{73}\,k_{72}] &\leftarrow s_8([k_{75}\,k_{74}\,k_{73}\,k_{72}]).
\end{align*}

    \item Update \([k_{50}\,k_{49}\,k_{48}\,k_{47}\,k_{46}]\) by XORing it with \([i]_2\), where \([i]_2\) denotes the 5-bit binary representation of the round index \(i\).
   \vspace{-0.2em}
    \[
    [k_{50}\,k_{49}\,k_{48}\,k_{47}\,k_{46}] \leftarrow [k_{50}\,k_{49}\,k_{48}\,k_{47}\,k_{46}] \oplus [i]_2.
    \]
\vspace{-1.2em}
    \item Output the leftmost 32 bits of the current register state as the next round subkey \( K_{i+1} \).
\end{enumerate}

Due to its lightweight design and efficient hardware implementation, LBlock has been widely studied in cryptanalysis, including differential attacks \cite{Shi2024}, impossible differential attacks \cite{Wang2015}, and other types of cryptanalysis \cite{Soleimany2014,Hadipour2022,CuiY2020}.
In the quantum security analysis of LBlock, Xiang et al.~\cite{Xiang2024} proposed a quantum attack method that constructs periodic functions using probability-1 truncated differentials. Based on this technique, they constructed an 8-round quantum distinguisher. In 2025, Liu et al.~\cite{Liu2025} proposed the first fully automated SMT-based search model and obtained another 8-round quantum periodic distinguisher. In this work, we construct a 9-round quantum distinguisher for LBlock, improving upon the previous 8-round results.

\subsection{9-round quantum distinguisher for LBlock }

For LBlock, which is based on a variant of the Feistel structure, we construct a periodic function \(f^{\mathcal{O}_3}(x)\). This function can be distinguished from a random permutation by Simon algorithm in \(O(n)\) time, where \(n\) denotes the subkey length.

Under the qCPA model, let \(\alpha_0,\alpha_1\) and $x_i^0$ ($i=1,\ldots,11,13,14,15$) be arbitrary \(n\)-bit constants with \(\alpha_0 \neq \alpha_1\). Given a 9-round LBlock encryption oracle \(\mathcal{O}_3\), define the function
\[
\begin{aligned}
f^{\mathcal{O}_3} : \{0,1\}^n &\rightarrow \{0,1\}^n \\
x &\mapsto x_{10}^9(\alpha_0)\oplus x_{10}^9(\alpha_1).
\end{aligned}
\]
Here, $x_{i}^j(\alpha_b)$ denotes the output of the $(i+1)$-th branch after the $j$-th round of LBlock for the plaintext $(\alpha_b, x_1^0, \dots, x_{11}^0, x, x_{13}^0, x_{14}^0,  x_{15}^0)$.
The 9-round LBlock encryption structure is illustrated in Fig.~\ref{9LBlock}.
Based on the above periodic function, we construct a 9-round quantum distinguisher for LBlock. Its correctness is proved in Lemma~\ref{Lemma2}.

\begin{figure*}[p]
    \centering
    \includegraphics[width=0.9\textwidth]{9LBlock.jpg}
    \caption{9-round quantum distinguisher for LBlock. Different colored lines indicate function terms with different dependencies: red lines denote functions depending only on \(x\), blue lines denote functions depending only on \(\alpha_b\), pink lines denote functions depending only on the periodic term \(0_s\), where \(0_s = x\oplus F_7(\alpha_b)\), and orange lines denote the XOR of functions depending only on \(0_s\) and \(\alpha_b\). The symbol ``?'' denotes an unknown function value without periodicity.}
    \label{9LBlock}
\end{figure*}

\begin{Lemma}\label{Lemma2}
If $\mathcal{O}_3$ is the 9-round LBlock encryption oracle, then under the qCPA model, for any \(x \in \{0,1\}^n\), the function \(f^{\mathcal{O}_3}(x)\) satisfies
\[
f^{\mathcal{O}_3}(x) = f^{\mathcal{O}_3}\bigl(x \oplus F_7(\alpha_0 ) \oplus F_7(\alpha_1)\bigr).
\]
That is, \(f^{\mathcal{O}_3}\) has period
$
s =F_7(\alpha_0 ) \oplus F_7(\alpha_1).
$
\end{Lemma}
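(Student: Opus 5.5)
The approach is to trace the 9-round LBlock encryption of the plaintext $(\alpha_b, x_1^0,\dots,x_{11}^0,x,x_{13}^0,x_{14}^0,x_{15}^0)$ branch by branch, using the round equations given in the description of LBlock. Each branch is classified symbolically, following the colour coding of Fig.~\ref{9LBlock}, into one of five types: a constant; a function of $x$ only; a function of $\alpha_b$ only; a function of the combined variable $u_b := x\oplus F_7(\alpha_b)$ only; or a sum $G(u_b)\oplus H(\alpha_b)$ with $G,H$ independent of $b$. The target is to show that
\[
x_{10}^9(\alpha_b) = G(u_b)\oplus H(\alpha_b)
\]
for fixed functions $G,H$ determined by the key and the constants.

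Given this form, periodicity follows as in the proof of Lemma~\ref{Lemma1}. Replacing $x$ by $x\oplus s$ with $s=F_7(\alpha_0)\oplus F_7(\alpha_1)$ maps $u_0\mapsto u_1$ and $u_1\mapsto u_0$. Therefore
\[
f^{\mathcal{O}_3}(x\oplus s)=G(u_1)\oplus H(\alpha_0)\oplus G(u_0)\oplus H(\alpha_1)=f^{\mathcal{O}_3}(x).
\]

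For the tracing itself, round~1 gives $x_2^1=F_7(x_0^0)\oplus x_{12}^0=F_7(\alpha_b)\oplus x=u_b$ and $x_8^1=\alpha_b$. Every other branch after round~1 is constant, because $x_{12}^0$ only feeds $x_2^1$ and $x_0^0$ only feeds $x_2^1$ and $x_8^1$. So after one round the state depends on $(x,\alpha_b)$ only through the two separate wires $u_b$ (at position~2) and $\alpha_b$ (at position~8).

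I then propagate rounds $2$ to $8$ with the rules
\[
x_j^i=F_\ast(x_{\pi(j)}^{i-1})\oplus x_{j+10 \bmod \ast}^{i-1}, \qquad x_{8+j}^i=x_j^{i-1}.
\]
A $u$-type or $\alpha$-type input to any $F_\ast$ stays of that type, an XOR of a $u$-term and an $\alpha$-term is of sum type, and anything applied to an $F$ of a sum is marked ``?''. For example, $u_b$ moves to $x_{10}^2$, then $x_0^3=F_6(\cdot)\oplus u_b$ is $u$-type. Meanwhile $\alpha_b$ gives $x_6^2=F_3(\text{const})\oplus\alpha_b$, which is $\alpha$-type, then $x_{14}^3$, and so on. Finally I use $x_{10}^9=x_2^8=F_7(x_0^7)\oplus x_{12}^7$ and $x_{12}^7=x_4^6$. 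It then suffices to check that $x_0^7$ is of pure type ($u$-only, $\alpha$-only, or constant), so that $F_7(x_0^7)$ stays pure, and that $x_4^6$ is pure or of sum type.

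The main obstacle is this bookkeeping through rounds $4$ to $7$. Because of the cyclic nibble shuffle, the $u$-cone and the $\alpha$-cone grow, and I must confirm they never meet inside the argument of any $F_\ast$ that feeds $x_0^7$ or $x_4^6$. I will do this by tabulating, for each round, the set of positions carrying $u$-dependence and the set carrying $\alpha$-dependence. Only the backward dependency cone of $x_2^8$ needs to be kept, which prunes most branches. The sum-type positions are allowed only in the XOR (non-$F$) slot. The key check is that in the cone of $x_0^7$ and $x_4^6$ every $F$-input lies in exactly one of the two sets. Branches outside this cone may become ``?'' without affecting the argument.
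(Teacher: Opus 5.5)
Your proposal is correct and follows essentially the same route as the paper: a forward symbolic trace using the same classification (the paper's pink/blue/orange terms), ending with $x_{10}^9(\alpha_b)=G(u_b)\oplus H(\alpha_b)$, after which $x\mapsto x\oplus s$ swaps $u_0$ and $u_1$. The checks you leave pending do go through --- $x_0^7$ is a function of $u_b$ alone and $x_{12}^7=x_4^6$ is a function of $\alpha_b$ alone --- so your cone-pruned bookkeeping recovers the paper's expression $\delta_7$ from its table.
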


\begin{proof}
Consider the plaintext
$
(\alpha_b,x_1^0,\ldots,x_{11}^0,x,x_{13}^0,x_{14}^0,$ $x_{15}^0),
$
where \(b\in\{0,1\}\) and \(\alpha_0\neq\alpha_1\).
As shown in Fig.~\ref{9LBlock}, the corresponding output of the 9-round LBlock encryption oracle \(\mathcal{O}_3\) is given by
\[
(x_0^9,x_1^9,\ldots,x_{15}^9)
=
\mathcal{O}_3
(\alpha_b,x_1^0,\ldots,x_{11}^0,x,x_{13}^0,x_{14}^0,x_{15}^0).
\]

In the first round, the input branch \(\alpha_b\) is processed by the function \(F_7\), and the resulting output is XORed with the branch \(x\), yielding
\[
x_2^1 = x \oplus F_7(\alpha_b).
\]

In the second round, the branch \(x_2^1\) is taken as the input of the function \(F_5\), whose output is then XORed with \(x_{13}^1\). In the following analysis, all constant terms irrelevant to the periodic structure are omitted. We then obtain
\[
x_3^2 = F_5\bigl(x \oplus F_7(\alpha_b)\bigr).
\]
Thus, both \(x_2^1\) and \(x_3^2\) belong to the class of terms depending only on \(x \oplus F_7(\alpha_b)\), represented by the pink lines in Fig.~\ref{9LBlock}.

On the other hand, the branch $x_8^1=\alpha_b$ is XORed with $F_3(x_4^1)$. Following the above convention of omitting constant terms, we write
\[
x_6^2=\alpha_b.
\]
Therefore, $x_6^2$ depends only on $\alpha_b$ and is represented by the blue lines.

Notably, in the 5th round, the output branch \(x_6^5\) can be expressed as
\[
x_6^5 = x \oplus F_7(\alpha_b) \oplus F_3(\alpha_b).
\]
This expression contains both a component related to \(x \oplus F_7(\alpha_b)\) and a component that depends solely on \(\alpha_b\).
It should be noted that whether such branches need to be tracked further depends on the target distinguisher being constructed. If the target period is determined by
\[
F_7(\alpha_b)\oplus F_3(\alpha_b),
\]
then the branch
$
x_6^5=x\oplus F_7(\alpha_b)\oplus F_3(\alpha_b)
$
must be tracked further, since it still contains the desired periodic information. Otherwise, if the distinguisher depends only on the structure \(x \oplus F_7(\alpha_b)\), then $x_6^5$ can be abstracted into the form
\[
R\bigl(x \oplus F_7(\alpha_b)\bigr) \oplus R(\alpha_b),
\]
where $R$ denotes an unspecified composition of round functions.
All branches of this form are represented by orange lines.
In this case, after applying subsequent round functions to $x_6^5$, we obtain
\[
R\bigl(R(x\oplus F_7(\alpha_b))\oplus R(\alpha_b)\bigr),
\]
which no longer explicitly preserves the target structure. Therefore, such branches do not need to be tracked further and are represented by green lines in Fig.~\ref{9LBlock}.

For simplicity, we focus on the output of the right branch of LBlock to analyze the derivation path of \(x_{10}^9(\alpha_b)\). The detailed output values of each round are given in Table~\ref{table2}. 
\begin{table*}[p]
\centering
\renewcommand{\arraystretch}{2}
\setlength{\tabcolsep}{2.5pt}
\vspace*{7.5cm}
\rotatebox{90}{
\begin{minipage}{12cm}
\centering
\caption{Right-branch outputs in each round of LBlock.}
\label{table2}
\begin{tabular}{|c|c|c|c|c|c|c|c|c|}
\hline
\diagbox{Output}{Index $i$}
& 8 & 9 & 10 & 11 & 12 & 13 & 14 & 15 \\ \hline
$x_i^1$
&
$\alpha_b$
&
$c$
&
$c$
&
$c$
&
$c$
&
$c$
&
$c$
&
$c$
\\ \hline
$x_i^2$
&
$c$
&
$c$
&
$0_s$
&
$c$
&
$c$
&
$c$
&
$c$
&
$c$
\\ \hline
$x_i^3$
&
$c$
&
$c$
&
$c$
&
$F_5(0_s)$
&
$c$
&
$c$
&
$\alpha_b$
&
$c$
\\ \hline
$x_i^4$
&
$0_s$
&
$F_4(F_5(0_s))$
&
$c$
&
$c$
&
$c$
&
$c$
&
$c$
&
$F_1(\alpha_b)$
\\ \hline
$x_i^5$
&
$F_6(F_4(F_5(0_s)))$
&
$F_5(0_s)$
&
$F_7(0_s)$
&
$c$
&
$\alpha_b$
&
$F_0(F_1(\alpha_b))$
&
$c$
&
$c$
\\ \hline
$x_i^6$
&
$F_6(F_5(0_s))$
&
$c$
&
$F_7(F_6(F_4(F_5(0_s))))$
&
$F_5(F_7(0_s))$
&
$F_2(F_0(F_1(\alpha_b)))$
&
$F_1(\alpha_b)$
&
$x\oplus F_7(\alpha_b)\oplus F_3(\alpha_b)$
&
$F_4(F_5(0_s))$
\\ \hline
$x_i^7$
&
$F_7(0_s)$
&
$F_4(F_5(F_7(0_s)))$
&
$F_7(F_6(F_5(0_s)))\oplus \alpha_b$
&
$\delta_1$
&
$F_2(F_1(\alpha_b))$
&
$F_0(F_4(F_5(0_s)))$
&
$\delta_2$
&
$-$
\\ \hline
$x_i^8$
&
$\delta_3$
&
$-$
&
$\delta_4$
&
$-$
&
$\delta_5$
&
$-$
&
$\delta_6$
&
$-$
\\ \hline
$x_i^9$
&
$-$
&
$-$
&
$\delta_7$
&
$-$
&
$-$
&
$-$
&
$-$
&
$-$
\\ \hline
\end{tabular}
\vspace{1ex}
\begin{minipage}{\textwidth}
\noindent\textbf{Note:}
\begin{align*}
0_s=& x \oplus F_7(\alpha_b)\\
\delta_1
=&
F_5\!\left(
F_7\!\left(
F_6(F_4(F_5(0_s)))
\right)\right)
\oplus
F_0(F_1(\alpha_b))
\\
\delta_2
=&
F_6(F_4(F_5(0_s)))
\oplus
F_3(F_2(F_0(F_1(\alpha_b))))
\\
\delta_3
=&
F_7(F_6(F_4(F_5(0_s))))
\oplus
F_6(F_4(F_5(F_7(0_s))))\\
\delta_4
=&
F_7(F_7(0_s))
\oplus
F_2(F_0(F_1(\alpha_b)))\\
\delta_5
=&
F_2(F_0(F_4(F_5(0_s))))
\oplus
x
\oplus
F_7(\alpha_b)
\oplus
F_3(\alpha_b)\\
\delta_6
=&
F_6(F_5(0_s))
\oplus
F_3(F_2(F_1(\alpha_b)))\\
\delta_7
=&
F_7\bigl(
F_7(
F_6(
F_4(
F_5(0_s)
)))
\oplus
F_6(
F_4(
F_5(
F_7(0_s)
)))
\bigr)
\oplus
F_2(F_1(\alpha_b)).&&
\end{align*}
\end{minipage}
\end{minipage}
}
\end{table*}
In particular, the output branch \(x_{10}^9(\alpha_b)\) can be written as
\[
\begin{aligned}
x_{10}^9(\alpha_b)=&F_7\bigl(F_7(F_6(F_4(F_5(x \oplus F_7(\alpha_b))))) \oplus \\&\quad  F_6(F_4(F_5(F_7(x \oplus F_7(\alpha_b)))))\bigr) \oplus F_2(F_1(\alpha_b)).
\end{aligned}
\]
Therefore, the function \(f^{\mathcal{O}_3}(x)\) can be expressed as
\[
\begin{aligned}
f^{\mathcal{O}_3}(x) &= x_{10}^9(\alpha_0) \oplus x_{10}^9(\alpha_1) \\
&= F_7\bigl( F_7( F_6(F_4(F_5(x \oplus F_7(\alpha_0)))) ) \oplus\\&\qquad  F_6( F_4(F_5(F_7(x \oplus F_7(\alpha_0)))) ) \bigr)  \oplus F_2(F_1(\alpha_0)) \\
&\quad \oplus F_7\bigl( F_7( F_6(F_4(F_5(x \oplus F_7(\alpha_1)))) ) \oplus\\&\qquad  F_6( F_4(F_5(F_7(x \oplus F_7(\alpha_1)))) ) \bigr)  \oplus F_2(F_1(\alpha_1)).\\
\end{aligned}
\]
The function \(f^{\mathcal{O}_3}(x)\) has a period \(s\), since it satisfies
\[
\begin{aligned}
f^{\mathcal{O}_3}&\bigl(x\oplus F_7(\alpha_0)\oplus F_7(\alpha_1)\bigr)\\
&=F_7\bigl( F_7( F_6( F_4( F_5( x \oplus F_7(\alpha_1) ) ) ) ) \oplus\\&\qquad  F_6( F_4( F_5( F_7( x \oplus F_7(\alpha_1) ) ) ) ) \bigr) \oplus F_2(F_1(\alpha_1)) \\
& \quad \oplus F_7\bigl( F_7( F_6( F_4( F_5( x \oplus F_7(\alpha_0) ) ) ) ) \oplus\\&\qquad  F_6( F_4( F_5( F_7( x \oplus F_7(\alpha_0) ) ) ) ) \bigr)
  \oplus F_2(F_1(\alpha_0))\\
&=f^{\mathcal{O}_3}(x),
\end{aligned}
\]
where $s = F_7(\alpha_0)\oplus F_7(\alpha_1)$. Thus, the lemma is proved.
\end{proof}
Through rigorous mathematical analysis, we show that $f^{\mathcal{O}_3}$ is a periodic function. Consequently, under the qCPA
model, we construct a 9-round quantum distinguisher for LBlock. By
applying Simon algorithm,  the hidden period $s$ can be efficiently recovered with $O(n)$ queries.

\subsection{Quantum key-recovery attack on LBlock}
Based on the constructed 9-round quantum distinguisher, we propose a 15-round quantum key-recovery attack on LBlock. The attack procedure consists of the following three steps.

(1) Query the 15-round LBlock encryption oracle \(\mathcal{O}_4\) with the plaintext
\((\alpha_b, x_1^0, \dots, x_{11}^0, x,x_{13}^0 \dots, x_{15}^0)\)
to obtain the corresponding ciphertext
\((x_0^{15}, x_1^{15}, \dots, x_{15}^{15})\).

(2) Apply Grover algorithm to recover the subkeys involved in the last six rounds. As shown by the orange dashed boxes in Fig.~\ref{fig3}, all corresponding round functions involve the subkeys to be guessed. Then, perform a partial decryption of the last six rounds to obtain the intermediate state
$
x_{10}^9.
$

\begin{figure}
    \centering
    \includegraphics[width=0.78\linewidth]{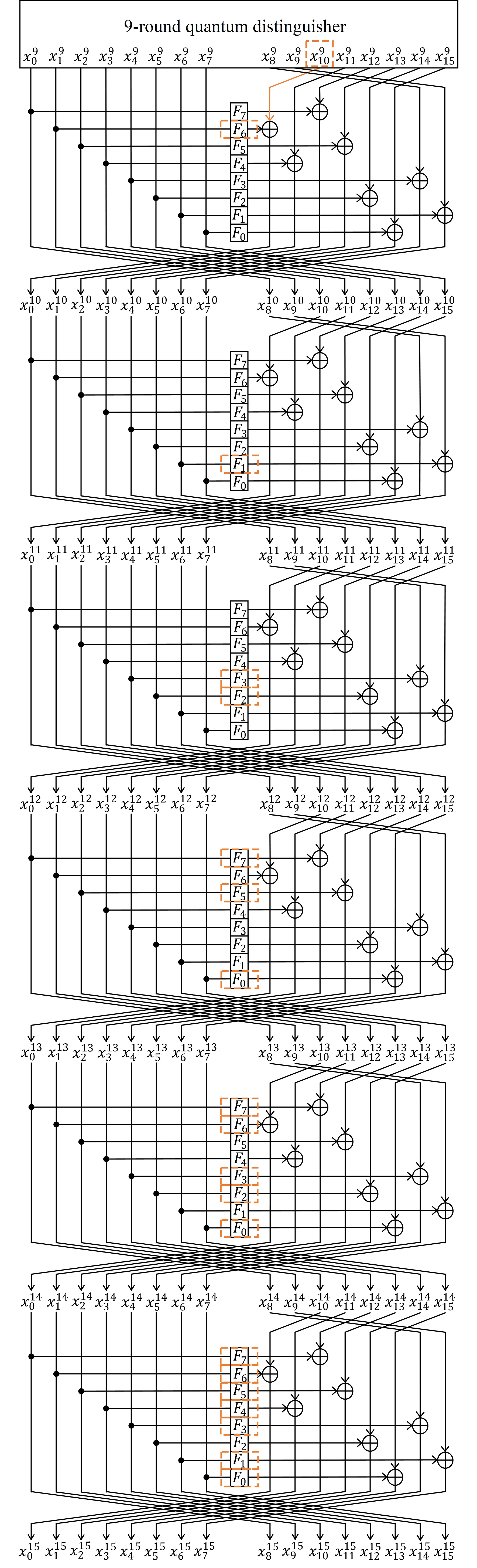}
    \caption{Quantum key-recovery attack on LBlock.}
    \label{fig3}
\end{figure}

(3) For each candidate subkey set, we apply the 9-round quantum distinguisher to verify its correctness. Only the correct subkey set can produce the periodic function described in Lemma~\ref{Lemma2}, whose hidden period can be recovered by Simon algorithm.

According to Step~(2), the attack requires guessing \(19n\) bits of subkeys using Grover algorithm, leading to a time complexity of \(O(2^{9.5n})\). More generally, for an \(r(r > 15)\)-round LBlock, the quantum key-recovery attack requires guessing \(19n + (r - 15)\times 8n\) bits of the subkeys. Consequently, the time complexity becomes \(O(2^{9.5n + (r - 15) \times 4n})\). The above complexities represent the generic attack complexities, where \(n\) denotes the subkey length.

For the LBlock lightweight block cipher, each round key consists of 32 bits, and each of the eight round functions is combined with a 4-bit subkey. By substituting \(n=4\) into the generic complexity formula, the attack described in Step~(2) requires only guessing 76 bits of subkeys using Grover algorithm, yielding a time complexity of \(O(2^{38})\).
For an \(r(r > 15)\)-round LBlock structure, the quantum key-recovery attack requires guessing \(76 + (r-15)\times 32\) bits of subkeys, resulting in a time complexity of
 \(O(2^{38 + (r - 15) \times 16})\).
The complexity of our attack is lower than that of the attack in Ref.~\cite{Liu2025} by a factor of \(2^{16}\).

\subsection{Improved quantum key-recovery attack on 15-round LBlock}

We further analyze the key scheduling of LBlock and obtain nine recurrence relations among the round subkeys. Exploiting these relations, once a subset of subkey bits is guessed, several subkey bits in other rounds can be directly derived, reducing the size of the Grover search space. Consequently, the number of subkey bits that must be guessed in the 15-round quantum key-recovery attack is reduced from 76 to 52, significantly decreasing the attack complexity.

As described in Section~4.1, the key scheduling of LBlock involves only reversible operations, namely cyclic shifts, S-box substitutions, and XORs with round constants. Hence, the key scheduling is bijective and efficiently invertible. Specifically, for \( i = 2, \ldots, 32 \), given the 80-bit key register state after the \(i\)-th key scheduling iteration, the \((i-1)\)-th round key can be recovered as follows.

\begin{enumerate}
\item Eliminate the round-constant contribution by XORing the bit segment
\([k_{50}\,k_{49}\,k_{48}\,k_{47}\,k_{46}]\)
with the round index \([i-1]_2\):
\[
[k_{50}\,k_{49}\,k_{48}\,k_{47}\,k_{46}] \oplus [i-1]_2
\rightarrow
[k_{50}\,k_{49}\,k_{48}\,k_{47}\,k_{46}].
\]
\item Apply the inverse S-boxes \(s_9^{-1}\) and \(s_8^{-1}\) to the highest 8 bits of the register.
\[
\begin{aligned}
s_9^{-1}([k_{79}\,k_{78}\,k_{77}\,k_{76}])
&\rightarrow
[k_{79}\,k_{78}\,k_{77}\,k_{76}],\\
\quad
s_8^{-1}([k_{75}\,k_{74}\,k_{73}\,k_{72}])
&\rightarrow
[k_{75}\,k_{74}\,k_{73}\,k_{72}].
\end{aligned}
\]

\item Perform a 29-bit cyclic right shift on the entire register:
$
K \ggg 29 \rightarrow K.
$

\item The register is now restored to the state corresponding to round \(i-1\). Therefore, the leftmost 32 bits of the register constitute the round key \(K_{i-1}\).
\end{enumerate}

Let \(K_i\) denote the \(i\)-th round key. The 32-bit round key \(K_i\) is partitioned into eight 4-bit subkeys as
\[
K_i=[K_i]_7\|\cdots\|[K_i]_1\|[K_i]_0,
\]
where \([K_i]_m\) denotes the \((m+1)\)-th nibble of \(K_i\) for \(m=0,1,\ldots,7\).
Furthermore, let \([K_i]_m[j]\) denote the \(j\)-th bit of the round subkey \([K_i]_m\), where \(j = 1,2,3,4\).
For brevity, \([K_i]_m[j_1,j_2,\ldots,j_t]\) denotes bits \(j_1\) through \(j_t\) of the \((m+1)\)-th nibble of the round key \(K_i\).

Using the inverse key scheduling procedure described above, the following relations among the round subkeys can be derived:
\[
\begin{cases}
[K_{15}]_7 \Rightarrow [K_{14}]_0[3,2,1],\\
[K_{15}]_1 \Rightarrow [K_{13}]_7[2,1],\\
[K_{15}]_0 \Rightarrow [K_{13}]_5[4,3],\\
[K_{14}]_7 \Rightarrow [K_{13}]_0[3,2,1],\\
[K_{15}]_{4,5} \Rightarrow [K_{12}]_3[4,3,2,1],\\
[K_{15}]_{3,4} \Rightarrow [K_{12}]_2[4,3,2,1],\\
[K_{14}]_{2,3} \Rightarrow [K_{11}]_1[4,3,2,1],\\
[K_{15}]_3 \Rightarrow [K_{10}]_6[4],\\
[K_{13}]_7 \Rightarrow [K_{10}]_6[1].
\end{cases}
\]

In summary, the invertibility of the LBlock key scheduling enables the derivation of several relations among round subkey bits. These relations allow multiple unknown subkey bits to be recovered from already guessed bits, thereby eliminating the need to include them in the Grover search space. As a result, the number of subkey bits that must be guessed in the 15-round quantum key-recovery attack is reduced from 76 to 52, which substantially improves the efficiency of the attack.

\section{Conclusion}
In this paper, we analyze the quantum security of the New Structure IV construction and the lightweight block cipher LBlock under the qCPA model. For New Structure IV, we construct an 8-round quantum distinguisher and perform a quantum key-recovery attack for the \(r(r>12)\)-round structure, achieving a better time complexity than the best existing result. For LBlock, we design a 9-round quantum distinguisher and construct a general quantum key-recovery attack. Furthermore, by exploiting the properties of its round function and key scheduling, we effectively reduce the subkey guessing space, thereby significantly improving the attack complexity. Comprehensive analysis shows that our methods achieve improvements in both the number of rounds covered by the distinguisher and the time complexity, providing new insights into the quantum security analysis of related constructions.

Future work may further extend our methods to other generalized Feistel structures or lightweight block ciphers, and design more efficient quantum distinguishers by exploiting different structural characteristics. It is also of interest to investigate the scalability and effectiveness of these methods under stronger attack models.


\ifCLASSOPTIONcompsoc
  \section*{Acknowledgments}
\else
  \section*{Acknowledgment}
\fi

This research was funded by the National Natural Science Foundation of China (Grant Nos. 62601169, 62171131, 62272056 and 62372048), Fujian Province Natural Science Foundation (Grant Nos. 2022J01186 and 2023J01533), Fujian Province Young and Middle-aged Teacher Education Research Project (Grant No. JAT231018)  and Open Foundation of State Key Laboratory of Networking and Switching Technology (Beijing University of Posts and Telecommunications) (SKLNST-2024-1-05).



%

\end{document}